\begin{document}

\title{Enumeration Based Search Algorithm For Finding A Regular Bi-partite Graph Of Maximum Attainable Girth For Specified Degree And Number Of Vertices}

\author{Vivek S Nittoor \and Reiji Suda}

\institute{The University Of Tokyo}

\maketitle
\begin{abstract}
We introduce a search problem for finding a regular bi-partite graph of maximum attainable girth for specified degree and number of vertices, by restricting the search space using a series of mathematically rigourous arguments from $[1]$ and $[2]$. The goal of this paper is to derive the enumeration search algorithm for finding a girth maximum $(m, r)$ BTU, which is notation for regular partite graph that has been introduced in $[1]$, using the optimal partition results from $[2]$ as a starting point, and also understand the structure of the search space and the computational complexity of the algorithm.
\end{abstract}

\section{Introduction}
The goal of this paper is to build upon the results developed in $[1]$ and $[2]$ and and propose a search algorithm in order to find a girth maximum Balanced Tanner Unit (BTU).  The search algorithm is based upon  enumeration of all elements of a Symmetric Permutation Tree described in $[1]$ with a fixed node at depth $1$. The theoretical background behind BTUs  has been introduced and explained in detail in $[1]$ and $[2]$.

\subsection{Definitions}
We review definitions from  $[1]$ and  $[2]$ .

\begin{definition} $(m, r)$ BTU \\
A $(m,r)$ Balanced Tanner Unit (BTU) is a regular bi-partite graph that can be represented by a $m \times m$ square matrix with $r$ non-zero elements in each of its rows and columns. Every $(m,r)$ BTU has a bipartite graph representation and an equivalent matrix representation.
\end{definition}

\begin{definition} {Girth maximum $(m,r)$ BTU} \\
{A labeled $(m,r)$  BTU  $A$ is girth maximum if there does not exist another labeled $(m,r)$  BTU $B$ with girth greater than that of $A$ .}
\end{definition}
\begin{definition}{Symmetric Permutation Tree and its properties} \\
{A  $m$  Symmetric permutation tree  $S_{\mathit{PT}}\{m\}$  is defined as a labeled tree with the following properties:}
\begin{itemize}
\item {$S_{\mathit{PT}}\{m\}$ has a single root node labeled  $0$.}
\item {$S_{\mathit{PT}}\{m\}$ has $m$  nodes at \ depth  $1$ from the root node.}
\item {$S_{\mathit{PT}}\{m\}$ has nodes at depths ranging from  $1$ to $m$ ,
with each node having a labels chosen from  $\{1,2,\ldots ,m\}$ . The
root node  $0$ has  $m$  successor nodes. Each node at \ depth  $1$ has $m-1$  successor nodes at depth  $2$ . Each \ node at depth  $i$  has 
$m\text{--}i+1$  successor nodes at depth $i+1$  . Each node at depth 
$m\text{--}1$  has  $1$  successor node at depth $m$  .}
\item {No successor node in  $S_{\mathit{PT}}\{m\}$ has the same node label as any of its ancestor nodes.} {No two successor nodes that share a common parent node have the same label.}
\item {The sequence of nodes in the path traversal from the node at depth  $1$ to the leaf node at depth  $m$  in  $S_{\mathit{PT}}\{m\}$  represents
the permutation represented by the leaf node.}
\item {$S_{\mathit{PT}}\{m\}$ has  $m!$  leaf nodes each of which represent an element of the symmetric group of degree  $m$ denoted by  $S_{m}$.}
\end{itemize}
\end{definition}
\begin{definition}{$\Phi (\beta _{1},\beta _{2},\ldots ,\beta _{r - 1})$ where $\beta _{i}\in P_{2}(m)$ for  $1\le i\le r-1$} \\
 $\Phi (\beta _{1},\beta _{2},\ldots ,\beta _{r\text{--}1})$ refers to the family of all labeled $(m,r)$ BTUs with compatible permutations  $p_{1,}p_{2,}\ldots ,p_{r}\in S_{m};p_{i}\notin C(p_{1},p_{2},\ldots ,p_{i-1})$ for  $1<i\le r$ that occur in the same order on a complete $m$ symmetric permutation tree ,  $x_{1,1}<x_{2,1}<\ldots <x_{r,1}$ where  $p_{j}=(x_{j,1}x_{j,2}\ldots x_{j,m});1\le j\le r$ , such that  $\beta _{i-1}$  is the partition between permutations $p_{i\text{--}1}$  and  $p_{i}$  for all integer values of  $i$  given by  $1<i\le r$ .
\end{definition}
\begin{definition}{Optimal partition parameters for girth maximum $(m, r)$ BTU} \\
{ $\beta _{1},\beta _{2},\ldots ,\beta _{r-1}\in P_{2}(b\ast k^{r-1})$ {refer to optimal partitions derived in} $[2]$ {such that there exists a girth maximum} $(m,r)$ { BTU \ in } $\Phi (\beta _{1},\beta _{2},\ldots ,\beta _{r-1})$ {,where } $\beta _{i}$ refers to $\sum _{j=1}^{r-1-i}b\ast k^{i}=b\ast k^{r-1}$ for $1\le i\le r-1$. Thus, $\beta _{1},\beta _{2},\ldots ,\beta _{r-1}\in P_{2}(b\ast k^{r-1})$ {are } $\sum _{j=1}^{k^{r-2}}b\ast k=b\ast k^{r-1}${, } $\sum _{j=1}^{k^{r-3}}b\ast k^{2}=b\ast k^{r-1}${, } $\ldots ${, } $\sum _{j=1}^{k}b\ast k^{r-2}=b\ast k^{r-1}$ {and } $\sum _{j=1}^{1}b\ast k^{r-1}=b\ast k^{r-1}$ {respectively.}}
\end{definition}

\begin{definition} Compatible Permutations \\
Two permutations on a set of  $s$  elements represented by 
$(x_{1}x_{2}\ldots x_{s});x_{p}\neq x_{q}$
$\forall p\neq q;1\le p\le
s;1\le q\le s;p,q\in \mathbb{N}$  where  $1\le x_{i}\le s;i\in
\mathbb{N};1\le i\le s$ and  $(y_{1}y_{2}\ldots y_{s});y_{p}\neq y_{q}$
 $\forall p\neq q;1\le p\le s;1\le q\le s;p,q\in \mathbb{N}$  where 
$1\le y_{i}\le m;i\in \mathbb{N};1\le i\le s$ are compatible if and
only if  $x_{i}\neq y_{i}\forall i\in \mathbb{N};1\le i\le s$.
\end{definition}

\section{Parameters For BTU Search}

\subsection {Assumptions for a $(m, r)$ BTU}
\begin{enumerate}
\item We assume that $r < m/2$.
\item We assume that $m$ is a composite number and contains a power of $r - 1$ in its prime factorization.
\end{enumerate}

We present an algorithm from $[2]$  for generating optimal partitions for girth maximum $(m,r)$ BTU. 
\subsection {Algorithm from $[2]$ to generate optimal partitions for a given value of  $k$  and $r$ }
{The following algorithm generates optimal partitions  $\beta _{1},\beta _{2},\ldots ,\beta _{r - 1}\in P_{2}(k^{r-1})$ for a given value of  $k$  and $r$ such that the girth maximum $(k^{r-1},r)$  BTU lies in $\Phi (\beta _{1},\beta _{2},\ldots ,\beta _{r - 1})$ .} \\
 $z=k$ ; \\
for( $i=1;i\le r-1;i$ ++) \{ \\
 $\beta _{i}$ refers to partition $\sum _{j=1}^{1}z \in P_2(z)$ ; \\
for( $l=1;l<i;l$ ++) \{ \\
\  $k\ast \beta _{l}$ ; //scale partition $\beta _{l}$ by  $k$ \\
\} \\
 $z=k\ast z$ ; \\
\} 

\subsection{Calculation of $b, k \in \mathbb{N}$ from $m,r \in \mathbb{N}$ such that  $m >  r$}
Given $m,r \in \mathbb{N}$ such that  $m >  r$,\ $k \in \mathbb{N}$ is the solution of the equation $m=b * k^{r-1}$ such that  $b \in \mathbb{N}$ is minimized and is denoted by the function $k=f(m,r)$ where $f:\{\mathbb{N}\cup \{0\}\}^{2}\to \mathbb{N}\cup \{0\}$ . If $m$ does not contain a power of $r - 1$ in its prime factorization, then the routine gives us $k = 1$ and $b = m$ which is not useful for our Enumeration Based Search Algorithm.We calculate $k$ such that $m=b * k^{r-1}$ such that  $b \in \mathbb{N}$ is minimized as follows. \\
$b=1; $ \; 
\\if( $(m/b)^{1/r-1}\in \mathbb{N}$) \{ 
\\{ $k=(m/b)^{1/r-1}$; \\ 
\} 
\\{else \{}
\\{while( $(m/b)^{1/r-1}\notin \mathbb{N}$) \{}
\\$b++$;
\\ \}
\\{ $k=(m/b)^{1/r-1}$ \ ;
\\ \} 

\begin{definition}{Family of labeled BTUs $X(m, r)$} \\
$X(m, r)$ is a family of labeled BTUs each of which represented by set of compatible permutations ${p_{1} , p_{2} , \ldots, p_{r}}$ such that
\begin{enumerate}
\item $\beta _{i}${ refers to the optimal partition given by} $\sum _{j=1}^{r-1-i}b\ast k^{i}=b\ast k^{r-1}$ between permutations $p_{i}$  and  $p_{i + 1}$  {for } $1\le i\le r-1$ with $k \in \mathbb{N}$ is the solution of the equation $m=b * k^{r-1}$ such that  $b \in \mathbb{N}$ is minimized.
\item $p_{r - 1} =I_{m}$, where $I_m$ is the identity permutation on a set of $m$ elements.
\end{enumerate}
\end{definition}

\begin{theorem}
\item Every labeled  $(m,r)$  BTU in  $\Phi (\beta _{1},\beta_{2},\ldots ,\beta _{r\text{--}1})$ is isomorphic to some labeled $(m,r)$  BTU in  $X(m, r)$ , where  $\beta _{i}${ refers to the optimal partitions given by} $\sum _{j=1}^{r-1-i}b\ast k^{i}=b\ast k^{r-1}$ {for } $1\le i\le r-1$ and $p_{r - 1} =I_{m}$, where $I_m$ is the identity permutation on a set of $m$ elements.
\end{theorem}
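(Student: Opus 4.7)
The plan is to exhibit, for each labeled BTU in $\Phi(\beta_1,\beta_2,\ldots,\beta_{r-1})$, an explicit BTU isomorphism that places the identity permutation in position $r-1$ while preserving the prescribed partition structure, thereby landing in $X(m,r)$.

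First I would fix a representative $(p_1,p_2,\ldots,p_r)$ of an arbitrary element of $\Phi(\beta_1,\ldots,\beta_{r-1})$ and recall that simultaneously post-composing every $p_i$ with a fixed $\sigma \in S_m$ corresponds to relabeling the $m$ column vertices of the associated bipartite graph. This operation therefore yields a labeled BTU isomorphic to the original, so it suffices to produce a single $\sigma$ whose image tuple meets the defining conditions of $X(m,r)$.

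Second, I would take the canonical choice $\sigma := p_{r-1}^{-1}$ and set $q_i := p_i \circ \sigma$ for each $1\le i\le r$. By construction $q_{r-1} = I_m$, which discharges condition (2) in the definition of $X(m,r)$. I would then verify that $(q_1,\ldots,q_r)$ still exhibits the optimal partition structure: the partition between any two permutations depends only on the combinatorial pattern of equal/unequal column entries, and this pattern is invariant under uniform right-composition by $\sigma$, since the columns are merely relabeled. The same observation shows that the compatibility relations $p_i \notin C(p_1,\ldots,p_{i-1})$ transfer verbatim to the $q_i$, so the new tuple is still a valid compatible family of permutations.

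Finally, I would reindex $(q_1,\ldots,q_r)$ to fit the conventions of $X(m,r)$, where the identity permutation must sit at index $r-1$ rather than wherever the $\Phi$-ordering $x_{1,1}<x_{2,1}<\cdots<x_{r,1}$ would otherwise place it. I expect the main obstacle to be precisely this bookkeeping: since permutations in $\Phi$ are indexed by increasing first-column entries, the identity after relabeling would naively be $q_1$, so the proof must show that a further reindexing is compatible with the symmetry of the optimal partitions $\beta_1,\ldots,\beta_{r-1}$ supplied by $[2]$ (in particular, that the chain of partitions between consecutive permutations is still the prescribed $\beta_i$ after the reorder). Once this verification is in place, the reindexed tuple is a labeled BTU in $X(m,r)$ isomorphic to the original one in $\Phi(\beta_1,\ldots,\beta_{r-1})$, completing the proof.
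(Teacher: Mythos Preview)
Your approach is essentially the paper's: take an arbitrary element of $\Phi(\beta_1,\ldots,\beta_{r-1})$ and apply a graph isomorphism (the paper simply says ``row and column permutations'', you spell out the explicit column relabeling $\sigma = p_{r-1}^{-1}$) to force the $(r-1)$st permutation to be $I_m$, noting that such relabeling preserves the consecutive partition data. Your final reindexing worry is unnecessary, since the definition of $X(m,r)$ does not carry over the ordering constraint $x_{1,1}<x_{2,1}<\cdots<x_{r,1}$ from $\Phi$; once $q_{r-1}=I_m$ and the partitions between $q_i$ and $q_{i+1}$ are the prescribed $\beta_i$, the tuple already lies in $X(m,r)$.
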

\begin{proof}  The proof has been provided in the Appendix.
\end{proof}

\subsection {$C_j$ as a permutation that corresponds to a circular permutation matrix}
\begin{definition}
Given $j\in \mathbb{N}$ such that  $1\le j<m$ , we define a permutation that corresponds to circular permutation matrix $C_{j}$ in the following manner .
Starting with an identity matrix  $I_{m}$, we move $j$ rows from the end and move it to the top in order to obtain the circular permutation matrix $C_{j}$ .
\end{definition}

\begin{theorem}
A labeled $(m, 2)$ BTU can be constructed by a set of two compatible permutations, identity permutation $I_m$  and the permutation that corresponds to the circular permutation matrix $C_j$ such that  $j\in \mathbb{N}$ such that  $1\le j<m$, have a partition of $\sum _{i=1}^{l}(m/l) = m$ if $j$ divides $m$ where  $l = min(k, m - k)$ and partition $(m) \in P_2(m)$ if $j$ and $m$ are relatively prime. 
\end{theorem}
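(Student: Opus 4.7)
The plan is to compute the cycle structure of $C_j$ directly and translate it into the partition between $I_m$ and $C_j$ in the sense in which $\beta$ is used in the definition of $\Phi$. First, I would verify compatibility of $I_m$ and $C_j$: reading off the circular permutation matrix, $C_j$ sends index $i\in\{1,\ldots,m\}$ to $i+j\pmod{m}$, and since $1\le j<m$ we have $i+j\not\equiv i\pmod{m}$ for every $i$, so $I_m$ and $C_j$ disagree in every coordinate and therefore form a legitimate $(m,2)$ BTU.

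Next, I would determine the cycle structure of $C_j$ as an element of $S_m$. Because $C_j$ acts on $\mathbb{Z}/m\mathbb{Z}$ by translation by $j$, the orbit of any element has length equal to the order of $j$ in this cyclic group, namely $m/\gcd(j,m)$. Hence $C_j$ decomposes into exactly $d:=\gcd(j,m)$ disjoint cycles, each of length $m/d$.

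Third, I would translate this into the BTU partition $\beta\in P_2(m)$. Following the treatment in $[1,2]$, the partition between two compatible permutations $p_1,p_2$ records the cycle structure of the bipartite multigraph formed by their two matchings; equivalently, its parts are the cycle lengths of $p_2\circ p_1^{-1}$. Since $C_j\circ I_m^{-1}=C_j$, the resulting partition has $d$ equal parts of size $m/d$. Specializing: when $j\mid m$ we get $d=\gcd(j,m)=j$ and the stated partition $\sum_{i=1}^{l}(m/l)=m$ with $l$ equal parts of size $m/l$; when $\gcd(j,m)=1$ we get $d=1$ and the single-part partition $(m)\in P_2(m)$.

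The main obstacle I anticipate is pinning down the precise convention used for $\beta\in P_2(m)$: the statement writes $l=\min(k,m-k)$, yet $k$ is not defined inside the theorem, so some care is needed in reconciling notations with $[2]$ and in absorbing the symmetry $C_j\sim C_{m-j}$ (both yield the same cycle decomposition since $\gcd(j,m)=\gcd(m-j,m)$). Once that convention is fixed, the remainder of the proof reduces to standard facts about orbits under translation in the cyclic group $\mathbb{Z}/m\mathbb{Z}$.
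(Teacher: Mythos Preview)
Your proposal is correct and follows essentially the same approach as the paper: both arguments compute the cycle structure of the circular shift $C_j$ acting on $\mathbb{Z}/m\mathbb{Z}$ and read off the partition from the orbit lengths. Your formulation via $d=\gcd(j,m)$ is in fact a cleaner and more general version of the paper's case analysis, and your observation that $l=\min(j,m-j)=j$ whenever $j\mid m$ (since then $j\le m/2$) reconciles the undefined symbol $k$ in the statement with your $\gcd$ computation.
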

\begin{proof}  The proof has been provided in the Appendix.
\end{proof}

\begin{corollary}
The permutation represented by circular permutation matrix $C_j$ on $m$ elements such that  $1\le j<m$ has a partition $(m) \in P_2(m)$ with the identity permutation $I_m$ if $j$ and $m$ are relatively prime. 
\end{corollary}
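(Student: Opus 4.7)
The plan is to obtain the corollary as an immediate specialization of Theorem~2 rather than developing a fresh argument. Theorem~2 already distinguishes two regimes for the partition between $I_m$ and the circular permutation associated with $C_j$, namely the case $j \mid m$ and the case $\gcd(j,m)=1$. The corollary simply isolates the latter, so the proof obligation is to point out that Theorem~2's second clause delivers exactly the partition $(m) \in P_2(m)$ without further work.

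To make the deduction explicit, I would first recall the compatibility of $I_m$ with the permutation induced by $C_j$: since $C_j$ is a nontrivial circular shift for $1\le j < m$, the image of any index $i$ under $C_j$ is $i + j \pmod m$ (with appropriate adjustment), which never equals $i$, so the two permutations are compatible in the sense of Definition~6 and therefore legitimately assemble into an $(m,2)$ BTU. Next, I would observe that the partition between $I_m$ and $C_j$ coincides with the cycle structure of $C_j$, since composing $C_j$ with $I_m^{-1}=I_m$ leaves $C_j$ unchanged.

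The number-theoretic step is then routine: the orbit of any element under the map $i \mapsto i + j \pmod m$ has length equal to the smallest positive integer $t$ with $tj \equiv 0 \pmod m$, i.e., $t = m/\gcd(j,m)$. When $\gcd(j,m)=1$ this gives $t=m$, so all $m$ elements lie in a single orbit and the cycle decomposition of $C_j$ consists of one $m$-cycle. This is precisely the partition $(m) \in P_2(m)$.

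The only conceptual obstacle worth flagging is bridging the paper's notion of a ``partition between two permutations'' (Definition~4) to the cycle decomposition of their relative quotient; but this identification is already embedded in Theorem~2, so for the corollary it suffices to cite the theorem and verify the orbit-length calculation above. No new machinery is required, and the whole proof reduces to one invocation plus one line of modular arithmetic.
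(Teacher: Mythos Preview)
Your proposal is correct and matches the paper's treatment: the corollary is stated immediately after Theorem~2 with no separate proof, being exactly the second clause of that theorem specialized to the coprime case. Your added orbit-length computation $t = m/\gcd(j,m)$ is more explicit than anything the paper writes, but it is just a restatement of the reasoning behind Theorem~2 itself, so the approaches are essentially identical.
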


\begin{definition}{Family of labeled BTUs $Z(m, r)$}  \\
$Z(m, r) \subset X(m, r)$ is a family of labeled BTUs each of which represented by set of compatible permutations \{${p_{1} , p_{2} , \ldots, p_{r}} \in S_{m}$\} such that
\begin{enumerate}
\item $\beta _{i}$ refers to the optimal partition given by $\sum _{j=1}^{r-1-i}b\ast k^{i}=b\ast k^{r-1}$ between permutations $p_{i}$  and  $p_{i + 1}$ for $1\le i\le r-1$ with $k \in \mathbb{N}$ being the solution of the equation $m=b * k^{r-1}$ such that  $b \in \mathbb{N}$ is minimized.
\item $p_{r - 1} =I_{m}$, where $I_m$ is the identity permutation on a set of $m$ elements.
\item $p_{j}=k^{r -1 - j}\ast q_{j}$ for $1 \le j \le r - 2$, and $q_{j} \in S_{b \ast k^{j}}$.
\end{enumerate}
\end{definition}
\section {Progressive Reduction Of Search Space} 
We start from the defintion of the Optimal partition parameters for girth maximum $(m, r)$ BTU derived in $[2]$, and restrict the search space.  Without loss of generality, we progressively reduce the search space for a girth maximum $(m, r)$ BTU in the following steps.  $\beta _{i}${ refers to the optimal partitions given by} $\sum _{j=1}^{r-1-i}b\ast k^{i}=b\ast k^{r-1}$ {for } $1\le i\le r-1$ in the following steps. Table $1$ shows a progressive reduction in the size of the search space for number of choices for $p_{1}, p_{2}, p_{3} \in S_{m}$ for a $(m, 3)$ BTU with $m = k^2$. Best permutation $p_{i -2}$ that maximizes the girth for each $(b \ast k^{i -1}, i)$  BTU can be found by search.
\begin{enumerate}
\item Space of all labeled $(m,r)$ BTUs.
\item $\Phi (\beta _{1},\beta _{2},\ldots ,\beta _{r - 1})$.
\item $Z(m, r) \subset X(m, r)$.
\end{enumerate}
\begin{table}
\caption{Size Of the Search Space for $r=3$ with $m = k^2$}
\begin{tabular}{llllll}
\hline\noalign{\smallskip}
$Search Space$ & $p_{1}$ choices  & $p_{2}$ choices   & $p_{3}$ choices  \\
\noalign{\smallskip}
\hline
\noalign{\smallskip}
Labeled  $(m,3)$ BTUs & $m!$ & $ \approx {(m-1)!}$ &  $ \approx {(m-2)!}$\\
$\Phi (\beta _{1},\beta _{2},\beta _{3})$ & $1$ & $1$ & ${(k-1)!}^{k -1}$   \\
$Z(m,3)$ & $1$ & $1$ & $(k-1)!$ \\
\hline
\end{tabular}
\end{table}

\section {Scaling}
{The validity of restricting the form of $p_{i-2}\in S_{b\ast k^{i-1}}$ to that of scaled versions of \  $q_{i-2}\in S_{b\ast k^{i-2}}$  at each stage of the girth maximum BTU search process directly flows from the micro-partition cycle maximization criterion and is a logical extension of the results proved in  $[2]$.
\subsection {Scaling of a permutation}
{Scaling of a partition has been defined in $[2]$ . We similarly define scaling of a permutation.} 
\begin{definition}
Given a permutation $q\in S_{b\ast k}$ , the scaling of permutation is defined as
 $s:S_{b\ast k}\to S_{b\ast k^{2}}$ where  $q\in S_{b\ast k}$ is mapped to  $p\in S_{b\ast k^{2}}$ in the following manner \\
for( $j=1;j\le b\ast k;j$++) \{ \\
\ \  $p$.[depth  $j$] = $q$.[depth  $j_{1}$] +  $j_{2}\ast k$; \\
\} \\
 $j_{1}=j$ mod  $k$, \ is the depth of  $q$ in  $S_{b\ast k}$ and $j_{2}$ is the integer part of  $(j/k)$.
 $s$ maps  $S_{b\ast k}$ to  $\{(b\ast k)!\}^{k}$ elements of  $S_{b\ast k^{2}}$ which is a significantly smaller set than  $S_{b\ast k^{2}}$ which has $(b\ast k^{2})!$ elements.
\end{definition}

\subsection {Notation for scaling}
{We denote the scaled permutation $p\in S_{b\ast k^{2}}$ as  $k\ast q$ where  $q\in S_{b\ast k}$ .}
\begin{figure}[htbp]
\begin{equation*}
\left[\begin{matrix}0&0&1&0\\0&0&0&1\\1&0&0&0\\0&1&0&0\end{matrix}\right]\rightarrow \left[\begin{matrix}0&0&1&0&0&0&0&0\\0&0&0&1&0&0&0&0\\1&0&0&0&0&0&0&0\\0&1&0&0&0&0&0&0\\0&0&0&0&0&0&1&0\\0&0&0&0&0&0&0&1\\0&0&0&0&1&0&0&0\\0&0&0&0&0&1&0&0\end{matrix}\right]
\end{equation*}
\begin{equation*}
(\begin{matrix}3&4&1&2\end{matrix})\rightarrow (\begin{matrix}3&4&1&2&7&8&5&6\end{matrix})
\end{equation*}
\caption{Example for Scaling of a permutation in $S_{4}$ to $S_{8}$ with matrix and permutation representations}
\end{figure}

\subsection {Explanation of micro-partition cycle maximization}
Explanation of micro-partition cycle maximization has been provided in $[2]$ and is one of the key arguments that allows us to apply restrictions to the Search Space in order to find a girth maximum $(m, r$ BTU.

\begin{theorem}  {Scaling Theorem for a girth maximum $(m,r)$ BTU for $r \ge 3$} \\
There exists a girth maximum $(m,r)$  BTU in  $Z(m, r)$.
\end{theorem}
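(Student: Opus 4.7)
The plan is to start from a girth maximum $(m,r)$ BTU in $X(m,r)$, whose existence is guaranteed by Theorem 1, and then show by a downward induction on $j$ ranging from $r-2$ down to $1$ that each $p_j$ may be replaced by a permutation of the scaled form $k^{r-1-j} \ast q_j$ with $q_j \in S_{b \ast k^j}$ without decreasing the girth. The resulting set of compatible permutations then satisfies all three conditions defining $Z(m,r)$, yielding the desired girth maximum BTU.

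The first substantive step is to unpack the constraint imposed by the optimal partitions. The partition $\beta_i$ between $p_i$ and $p_{i+1}$ consists of $k^{r-1-i}$ parts each of size $b \ast k^i$, so in matrix terms $p_i$ and $p_{i+1}$ differ by a permutation whose support decomposes into $k^{r-1-i}$ diagonal blocks of size $b \ast k^i$. Using $p_{r-1} = I_m$ as an anchor and walking backwards through the chain of partitions, this forces each $p_j$ (for $1 \le j \le r-2$) to possess a nested block structure that partitions the $m$ indices into $k^{r-1-j}$ groups of size $b \ast k^j$ on which $p_j$ acts block-wise, matching exactly the image of the iterated scaling map applied to some element $q_j \in S_{b \ast k^j}$, up to within-block freedom.

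Within this block-diagonal template, the remaining degrees of freedom lie in the choice of within-block sub-permutations and their relative alignment across different blocks. Here I would invoke the micro-partition cycle maximization principle from [2]: the short cycles contributed jointly by $p_1, \ldots, p_r$ on any fixed sub-block at resolution $j$ are minimized in number (equivalently, micro-partition cycles are maximized in length) precisely when the within-block sub-permutations coincide up to the block offset, which is exactly the scaling condition on a permutation defined earlier in this section. Consequently, the substitution $p_j \mapsto k^{r-1-j} \ast q_j$ does not introduce any shorter cycles at level $j$, and by downward induction it is compatible with the scaled forms already fixed at levels $j+1, \ldots, r-2$.

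The main obstacle will be verifying that the substitutions at different levels interact compatibly, that is, that fixing $p_j$ to its scaled form does not inadvertently shorten a cycle produced by intersecting rows drawn from $p_{j'}$ at some level $j' \neq j$. This requires an accounting argument showing that cycles in the full bipartite graph decompose according to the hierarchical block structure induced by $\beta_1, \beta_2, \ldots, \beta_{r-1}$, with cycles at level $j$ depending only on the within-block behaviour of $p_j$ together with the global behaviour of the remaining permutations, and that the extremality result of [2] applies independently at each level. Once this level-wise independence is established, the replacements commute and the final BTU lies in $Z(m,r)$ with girth no smaller than the original, completing the proof.
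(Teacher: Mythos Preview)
Your proposal is essentially aligned with the paper's proof: both arguments rest on invoking the micro-partition cycle maximization criterion from~[2], now applied not just to the adjacent pairs $(p_i,p_{i+1})$ that gave the optimal $\beta_i$, but to the non-adjacent pairs $(p_u,p_v)$ with $u>v+1$, which is what forces the scaled form $p_i = k^{r-1-i}\ast q_i$.

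The main structural difference is one of packaging. The paper's proof is a one-paragraph direct appeal: it simply states that applying the micro-partition criterion to every non-adjacent pair yields the scaled form, and stops there. You instead set up a downward induction on $j$, starting from an element of $X(m,r)$ and successively replacing each $p_j$ by a scaled permutation while arguing that girth does not decrease. Your framework is more explicit about the nested block structure induced by the chain $\beta_1,\ldots,\beta_{r-1}$ anchored at $p_{r-1}=I_m$, and you correctly flag the genuine issue---compatibility of the substitutions across levels---that the paper leaves entirely implicit. What the paper's terser route buys is brevity (it delegates everything to~[2]); what your route buys is a clearer road map for actually verifying the claim, should one want to fill in the details the paper omits.
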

\begin{proof}  The proof has been provided in the Appendix.
\end{proof}

\begin{corollary} {Scaling Theorem for a girth maximum $(m,3)$ BTU} \\
There exists a girth maximum $(m,3)$ BTU in $Z(m, 3)$.
\end{corollary}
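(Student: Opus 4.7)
The plan is to obtain this corollary as a direct specialization of the preceding Scaling Theorem with $r = 3$, since that theorem already asserts existence of a girth maximum $(m, r)$ BTU in $Z(m, r)$ for every $r \ge 3$.

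To record the specialization explicitly, I would unpack what membership in $Z(m, 3)$ demands when $m = b \ast k^{2}$: a labeled BTU represented by three compatible permutations $p_1, p_2, p_3 \in S_m$ with $p_2 = I_m$, with $p_1 = k \ast q_1$ for some $q_1 \in S_{b \ast k}$ (since only $j = 1$ lies in the range $1 \le j \le r - 2 = 1$ when $r = 3$), and with the partitions $\beta_1$ between $p_1, p_2$ and $\beta_2$ between $p_2, p_3$ equal to the optimal partitions from $[2]$; $p_3$ is otherwise free subject to compatibility. The $r = 3$ instance of the Scaling Theorem asserts exactly this list of conditions, so the corollary follows by reading off that instance.

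Since the corollary is a pure specialization, there is no independent obstacle; all of the substantive work is absorbed into the proof of the Scaling Theorem in the Appendix, which rests on the micro-partition cycle maximization argument from $[2]$ used to justify restricting $p_1$ to the scaled form $k \ast q_1$. If the Scaling Theorem is proved by induction on $r$, then $r = 3$ is in fact the base case of that induction and the present corollary simply records the base case; if it is proved uniformly for all $r \ge 3$, then no further work is required here. In either reading, the only thing worth verifying explicitly is that condition~3 in the definition of $Z(m, r)$ reduces at $r = 3$ to a single scaling constraint on $p_1$ and imposes no constraint on $p_3$, which is what allows the search in Table~1 to enumerate $p_3$ over the reduced space of size $(k-1)!$.
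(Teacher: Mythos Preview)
Your proposal is correct and matches the paper's approach: the corollary is stated immediately after the general Scaling Theorem (Theorem~3) with no separate proof, so it is simply the specialization to $r=3$. Your additional unpacking of $Z(m,3)$ is accurate but goes beyond what the paper records for this corollary.
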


\begin{theorem} 
Every non-isomorphic $(m, r)$ in $Z(m, r)$ can be generated by the following algorithm \\
We find $b,k\in \mathbb{N}$ such that $b$ is the smallest integer satisfying $m=b\ast k^{r-1}$; \\
for(  $i=2;i<r;i$++) \{ \\
\  $p_{i}=C_{j};\mathit{min}(b\ast k^{i-1}-j,j)>b\ast k^{i-2}$ such that $(j,b\ast k^{i-1},b\ast k^{i-1}-j)$ are relatively prime; \\
if( $i$ \ \ == \  $2$ \ ) \\
$p_{i-1}=I_{b\ast k^{i-1}};$  \\
\ else \{ \\
\ Rearrange the $(b\ast k^{i-1},i)$ BTU such that \ \  $p_{i-1}=I_{b\ast k^{i-1}}$; \\
We enumerate all $q_{i-2}\in S_{b\ast k^{i-2}}$ such that a $(b\ast k^{i-1},i)$ BTU is formed by  $p_{1},\ldots ,p_{i}\in S_{b\ast k^{i-1}};p_{i -2}=k\ast q_{i -2}$; \\
if(i != r -- 1) \\
Scale permutations $p_{y}=k\ast q_{y};1\le y\le i$; \\
\} \\
\}
\end{theorem}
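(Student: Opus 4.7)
The plan is to proceed by induction on $i$ from $2$ to $r-1$, showing that after iteration $i$ the algorithm has enumerated, up to isomorphism, every $(bk^{i-1}, i)$ sub-BTU lying in $Z(bk^{i-1}, i)$. The outer structure of the argument exploits the nested definition of $Z$: the identity constraint $p_{r-1} = I_m$ together with the scaling constraint $p_j = k^{r-1-j} \ast q_j$ means that a BTU in $Z(m, r)$ is essentially a $(bk^{r-2}, r-1)$ sub-BTU in $Z(bk^{r-2}, r-1)$ lifted by one scaling and then extended by an additional circular permutation at the top level.

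For the base case $i = 2$, fix $p_1 = I_{bk}$; the companion $p_2$ must form a $(bk, 2)$ BTU whose partition with the identity is the optimal single-block partition $(bk) \in P_2(bk)$. By the Corollary to Theorem~2, such a $p_2$ is precisely a circular permutation $C_j$ with $j$ relatively prime to $bk$, and the stated triple-relative-primality of $(j, bk, bk - j)$ is the appropriate refinement that forces the partition to be the single block and not a proper refinement. The constraint $\min(bk - j, j) > b$ selects a single canonical representative from the inverse pair $\{C_j, C_{bk - j}\}$, which would otherwise produce isomorphic sub-BTUs.

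For the inductive step, assume the algorithm has produced every non-isomorphic $(bk^{i-1}, i)$ sub-BTU in $Z$. The Scaling Theorem guarantees that every element of $Z(bk^i, i+1)$ has its first $i$ permutations of the form $k \ast q_y$, so the algorithm's scaling step $p_y \leftarrow k \ast p_y$ correctly lifts every previously enumerated sub-BTU into $S_{bk^i}$. Appending $p_{i+1} = C_{j'}$, again forced by the Corollary after the rearrangement bringing $p_i$ to the identity, and then enumerating every $q_{i-1} \in S_{bk^{i-1}}$ such that $p_{i-1} = k \ast q_{i-1}$ is compatible with the existing $p_{i+1}$ and produces the optimal partition at level $i$, completes the enumeration. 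The freedom in $q_{i-1}$ exactly matches the freedom left by the scaling structure of $Z$ after the identity and circular-permutation constraints are imposed.

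The main obstacle I expect is verifying that the two joint conditions on $j$, namely $\min(bk^{i-1} - j, j) > bk^{i-2}$ and pairwise relative primality of $(j, bk^{i-1}, bk^{i-1} - j)$, together yield exactly one representative per isomorphism class for the choice of $p_i$. The relative primality is needed to force the partition between $p_{i-1}$ and $p_i$ to be the single block of size $bk^{i-1}$, while the min condition eliminates the inverse redundancy $C_j \leftrightarrow C_{bk^{i-1} - j}$ above the micro-partition threshold $bk^{i-2}$ inherited from $[2]$. A secondary obstacle is showing that the rearrangement bringing $p_{i-1}$ to identity is a well-defined isomorphism that preserves compatibility and partition structure with both the scaled-down earlier permutations and the newly appended $C_{j'}$; this should follow from the relabeling invariance of the Symmetric Permutation Tree construction in $[1]$ together with the micro-partition cycle maximization criterion from $[2]$.
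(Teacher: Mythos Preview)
Your inductive plan is considerably more elaborate than the paper's argument, which is essentially a two-line unwinding of the definition. The paper simply observes that, by the defining conditions of $Z(m,r)$---in particular condition~3, which already forces $p_j = k^{r-1-j}\ast q_j$ with $q_j\in S_{bk^j}$ for $1\le j\le r-2$---an element of $Z(m,r)$ is determined by the tuple $(q_1,\ldots,q_{r-2})$, and the algorithm's inner enumeration step runs over exactly those tuples. No induction, no appeal to the Scaling Theorem, and no analysis of the constraints on $j$ is made.

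Two remarks on where your plan diverges. First, you invoke the Scaling Theorem to conclude that the first $i$ permutations of an element of $Z(bk^i,i+1)$ are scaled; but this is part of the \emph{definition} of $Z$, not a consequence of the Scaling Theorem (which is an existence statement about a girth-maximum BTU and plays no role here). Second, your main anticipated obstacle---verifying that the relative-primality and $\min$ conditions on $j$ single out exactly one representative per isomorphism class---goes beyond what the theorem claims. The statement only asserts that every non-isomorphic element of $Z(m,r)$ \emph{can} be generated; it says nothing about non-redundancy of the enumeration, and the paper's proof makes no attempt to establish it. Your analysis of those constraints would be pertinent if one wanted injectivity on isomorphism classes, but for the theorem as written it is work you do not need to do.
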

\begin{proof} Consider an arbitrary element of $A \in Z(m, r)$ with permutations \{${p_{1} , p_{2} , \ldots, p_{r}} \in S_{m}$\} such that
\begin{enumerate}
\item $\beta _{i}$ refers to the optimal partition given by $\sum _{j=1}^{r-1-i}b\ast k^{i}=b\ast k^{r-1}$ between permutations $p_{i}$  and  $p_{i + 1}$ for $1\le i\le r-1$.
\item $p_{r - 1} =I_{m}$, where $I_m$ is the identity permutation on a set of $m$ elements.
\item $p_{j}=k^{r -1 - j}\ast q_{j}$ for $1 \le j \le r - 2$, and $q_{j} \in S_{b \ast k^{j}}$.
\end{enumerate}
It is clear that $Z(m, r)$ consists of enumerations of \{$q_{1}\in S_{b\ast k^{1}}, q_{2}\in S_{b\ast k^{2}}, \ldots, q_{r - 2}\in S_{b\ast k^{r - 2}}$ \} and this is precisely the same set of enumerations given by the proposed algorithm.
\end{proof}

\begin{theorem} 
Given a girth maximum $(k^{i-1}, i)$ BTU with permutations $p_{1},\ldots ,p_{i}\in S_{k^{i-1}}$, with $\beta _{l}$ referring to the optimal partition given by $\sum _{j=1}^{r-1-l}k^{l}=k^{r-1}$ between permutations $p_{l}$  and  $p_{l + 1}$ for $1\le l\le i-1$. If we scale all permutations by $k$ and find permutation $p_{i + 1} \in S_{k^{i}}$ such that it leads to maximum girth, then the resultant $(k^{i}, i + 1)$ BTU is a girth maximum $(k^{i}, i + 1)$ BTU for all integers $i \ge 2$.
\end{theorem}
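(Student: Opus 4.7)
The plan is to leverage the Scaling Theorem together with the structural decomposition provided by $Z(k^i,i+1)$. By the Scaling Theorem applied at level $i+1$, some girth maximum $(k^i,i+1)$ BTU already lies in $Z(k^i,i+1)$, so it suffices to show that the greedy construction in the statement reaches the girth of the best element of $Z(k^i,i+1)$. I would begin by unwinding the definition of $Z(k^i,i+1)$: its elements are tuples $(p_1,\ldots,p_{i+1})$ with $p_j=k^{i-j}\ast q_j$ for $1\le j\le i-1$, $p_i=I_{k^i}$, and $p_{i+1}\in S_{k^i}$ subject only to the compatibility and partition constraints with $p_i$. Specifying an element of $Z(k^i,i+1)$ therefore amounts to specifying $(q_1,\ldots,q_{i-1})\in S_k\times S_{k^2}\times\cdots\times S_{k^{i-1}}$ together with $p_{i+1}$.

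Next I would set up a projection $\pi:Z(k^i,i+1)\to Z(k^{i-1},i)\times S_{k^i}$ sending $(p_1,\ldots,p_{i+1})$ to the pair consisting of $(k^{i-2}\ast q_1,\ldots,k\ast q_{i-2},q_{i-1},I_{k^{i-1}})$ and $p_{i+1}$. That the first coordinate genuinely lies in $Z(k^{i-1},i)$, with exactly the optimal partitions prescribed between consecutive permutations, is forced by the inductive construction of the optimal partitions in the algorithm of Section 2.2 and by the compatibility of scaling with partitions established in $[2]$. In these terms the theorem asserts that the maximum of girth over $Z(k^i,i+1)$ is attained by first maximizing girth over the $(k^{i-1},i)$ coordinate of $\pi$ and then optimizing the $p_{i+1}$ coordinate.

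The crux, and the step I expect to be the main obstacle, is a monotonicity claim: the girth of a BTU in $Z(k^i,i+1)$ is a non-decreasing function of the girth of its $\pi$-image in $Z(k^{i-1},i)$. The idea is that any cycle in the small $(k^{i-1},i)$ Tanner graph lifts, via the block-replication structure of the scaled matrices $k\ast q_j$, to a cycle of the same length in the scaled $(k^i,i)$ sub-BTU consisting of $p_1,\ldots,p_i$; conversely, by the micro-partition cycle maximization principle from $[2]$ every cycle within $p_1,\ldots,p_i$ projects to a cycle in the small BTU. Hence the smallest cycle among $p_1,\ldots,p_i$ in the scaled BTU has the same length as the girth of the small BTU, and so a sub-optimal choice of $(q_1,\ldots,q_{i-1})$ forces a strictly smaller girth at the $(k^i,i+1)$ level regardless of how $p_{i+1}$ is chosen. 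Making this ``lift-and-project'' correspondence fully rigorous — in particular, ruling out the creation of strictly shorter cycles among $p_1,\ldots,p_i$ by scaling — is the delicate part; once it is in place the theorem follows immediately by induction on $i$, with the base case $i=2$ handled by Theorem~2 on $C_j$ together with the scaling construction.
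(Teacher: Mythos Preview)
Your route is genuinely different from the paper's. The paper argues by straightforward induction on $i$: it verifies that after scaling, the consecutive partitions between $p_l$ and $p_{l+1}$ are exactly the optimal ones $\sum_{j=1}^{z-l}k^{l}=k^{z}$, then invokes the micro-partition cycle maximization criterion from $[2]$ to assert that scaling preserves the optimality of \emph{all} pairwise partitions, and finally concludes that since every pairwise partition is optimal and $p_{z+1}$ is chosen to maximize girth, the resulting BTU is girth maximum. In other words, the paper never reasons about cycles or about $Z(k^i,i+1)$ explicitly; it treats ``all pairwise partitions optimal'' as a sufficient condition inherited from $[2]$ and simply checks that condition is met at each inductive step.

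Your approach is more structural: you first reduce to $Z(k^i,i+1)$ via the Scaling Theorem, set up the projection $\pi$ to $Z(k^{i-1},i)\times S_{k^i}$, and try to prove a monotonicity statement via lift-and-project on cycles. This is more transparent about what actually needs to be shown, and the lift-and-project idea gives real information the paper's argument does not make explicit. However, there is a gap you should be aware of in how you plan to close it. The lift-and-project argument, as you describe it, yields only the upper bound $\mathrm{girth}(A)\le \mathrm{girth}(\pi_1(A))$ for $A\in Z(k^i,i+1)$. This shows that a sub-optimal projection caps the girth below $g^\ast=\max_{Z(k^{i-1},i)}\mathrm{girth}$, but it does \emph{not} show that every girth-maximum projection, once scaled and extended by the best $p_{i+1}$, attains the overall maximum $G^\ast=\max_{Z(k^i,i+1)}\mathrm{girth}$. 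Two distinct girth-maximum elements of $Z(k^{i-1},i)$ could in principle admit different optimal extensions, and the theorem as stated starts from an arbitrary girth-maximum $(k^{i-1},i)$ BTU. To finish along your line you would need either an isomorphism argument identifying all girth-maximum projections up to relabeling, or a direct argument that the optimal value of $\mathrm{girth}$ over $p_{i+1}$ depends only on $\mathrm{girth}(\pi_1(A))$ and not on the particular $\pi_1(A)$ achieving it. The paper sidesteps this entirely by leaning on the partition-level sufficiency criterion from $[2]$ rather than on monotonicity.
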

\begin{proof}  The proof has been provided in the Appendix.
\end{proof}

\begin{lemma} 
Given $m, r \in \mathbb{N}$ such that $r < m/2$, let $b,k\in \mathbb{N}$ satisfying $m=b\ast k^{r-1}$ such that $b$ is minimized, and if the prime factorization of the greatest common divisor of $b$ and $k$ do not have a non-trivial power of a prime, the following statement is true.
Given a girth maximum $(k^{i-1}, i)$ BTU with permutations $p_{1},\ldots ,p_{i}\in S_{k^{i-1}}$, with $\beta _{l}$ referring to the optimal partition given by $\sum _{j=1}^{r-1-l}k^{l}=k^{r-1}$ between permutations $p_{l}$  and  $p_{l + 1}$ for $1\le l\le i-1$. If we scale all permutations by $k$ and find permutation $p_{i + 1} \in S_{k^{i}}$ such that it leads to maximum girth, then the resultant $(k^{i}, i + 1)$ BTU is a girth maximum $(k^{i}, i + 1)$ BTU for all integers $i \ge 2$.
\end{lemma}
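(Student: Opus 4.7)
The plan is induction on $i \ge 2$, paralleling the proof of the preceding scaling theorem but tracking how the parameter $b$ and the square-free $\gcd$ hypothesis enter. The base case $i = 2$ is the $(k, 2)$ BTU constructed from $I_k$ together with a circular permutation $C_j$ satisfying $(j, k, k-j)$ pairwise coprime; the required optimal partition $(k) \in P_2(k)$ is guaranteed by the earlier theorem on $(m,2)$ BTUs, and the girth-maximizing choice of $p_3 \in S_{k^2}$ is then determined by an enumeration of $q_1 \in S_k$ in the scaled copy. For the inductive step, I would assume the statement at stage $i-1$, so that every girth-maximum $(k^{i-1}, i)$ BTU respecting the optimal partitions arises via the scale-and-extend construction.

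The core of the inductive step invokes the Scaling Theorem together with the micro-partition cycle maximization criterion from $[2]$. The Scaling Theorem places a girth-maximum $(k^i, i+1)$ BTU inside $Z(k^i, i+1)$, so its first $i$ permutations factor as $k$-scalings $k \ast q_1, \ldots, k \ast q_i$ with $q_j$ in the appropriate symmetric group. Micro-partition cycle maximization then forces the $q_j$ themselves to realize a girth-maximum $(k^{i-1}, i)$ BTU with the inherited optimal partitions, which the inductive hypothesis identifies with the hypothesized sub-BTU; the extra permutation $p_{i+1}$ is then the girth-maximal completion by construction, closing the induction.

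The principal obstacle, and the precise reason the $\gcd$ hypothesis appears here but not in the preceding theorem, is ruling out girth-maximum $(k^i, i+1)$ BTUs that are \emph{not} $k$-scalings of any lower-level BTU. If $\gcd(b, k)$ contained a prime square $p^2$, then $b$ and $k$ would share a nontrivial factor of $p^2$ and the combined scale would admit periodic block decompositions strictly finer than those produced by the scaling map; such decompositions could support candidates outside the image of scaling whose cycle structure strictly improves on every scaled candidate, breaking the induction. When $\gcd(b, k)$ is square-free, every shared prime contributes only at first power, no new periodic block emerges at level $i$ beyond what is already present at level $i-1$, and the micro-partition cycle argument of $[2]$ transfers verbatim from the $b = 1$ case. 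I expect the sharpest difficulty to be making this exclusion rigorous: proving that the square-free $\gcd$ hypothesis is precisely what forces every girth-maximum candidate into the image of the scaling map $s$, so that the enumeration over $q_j$ truly exhausts the girth-maximum space and no rogue non-scaled BTU can beat it.
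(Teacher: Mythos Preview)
Your plan is considerably more elaborate than what the paper actually does. The paper's proof of this lemma is essentially a one-line reduction: once the square-free $\gcd$ hypothesis is in force, the core statement coincides verbatim with Theorem~5, and the paper simply invokes Theorem~5. There is no separate induction to run here; all of the inductive work was already carried out in the proof of Theorem~5, and the lemma just records the extra hypothesis on $b$ and $k$ under which one is entitled to apply that theorem inside the larger algorithm.

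Where your proposal diverges more substantively is in the role you assign to the $\gcd$ hypothesis. You frame it as blocking ``rogue'' non-scaled girth-maximum BTUs that might arise from finer periodic block decompositions, and you correctly anticipate that making such an exclusion rigorous would be the hard part. The paper sidesteps this entirely. Its argument is that if $\gcd(b,k)$ contains $h^{t}$, then at the intermediate stage $(b\ast k^{t},\,t+1)$ one has the alternative factorization $b\ast k^{t} = (b/h^{t})\ast (k\ast h)^{t}$ with strictly smaller first factor, so the minimality of $b$ fails at that stage; consequently the optimal partitions (and hence the correct scaling factor) at that stage are governed by $k\ast h$ rather than $k$, and Theorem~5 no longer matches the algorithm's scaling step. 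In other words, the hypothesis is not there to rule out exotic competitors to scaled BTUs, but to guarantee that the pair $(b,k)$ determined at the top level remains the minimizing pair at every intermediate level, so that Theorem~5 is applicable throughout. Your inductive scheme would eventually recover Theorem~5, but the detour through ``excluding non-scaled candidates'' is unnecessary and, as you yourself note, would be the hardest part to make precise.
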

\begin{proof}  The proof has been provided in the Appendix.
\end{proof}

\section{High Level Description Of Enumeration Based Search Algorithm for Girth Maximum $(m, r)$ BTU in $Z(m, r)$}
The Enumeration Based Search Algorithm for Girth Maximum $(m, r)$ BTU is derived from the algorithm to enumerate elements of $Z(m, r)$ described in Theorem $4$.
\subsection{Enumeration Based Search algorithm for girth maximum $(m, r)$ BTU for\ $r>3$} 
We find $b,k\in \mathbb{N}$ such that $b$ is the smallest integer satisfying $m=b\ast k^{r-1}$; \\
for(  $i=2;i<r;i$++) \{ \\
\  $p_{i}=C_{j};\mathit{min}(b\ast k^{i-1}-j,j)>b\ast k^{i-2}$ such that $(j,b\ast k^{i-1},b\ast k^{i-1}-j)$ are relatively prime; \\
if( $i$ \ \ == \  $2$ \ ) \\
$p_{i-1}=I_{b\ast k^{i-1}};$  \\
\ else \{ \\
\ Rearrange the $(b\ast k^{i-1},i)$ BTU such that \ \  $p_{i-1}=I_{b\ast k^{i-1}}$; \\
Find $q_{i-2}\in S_{b\ast k^{i-2}}$ such that it maximizes girth of $(b\ast k^{i-1},i)$ BTU is formed by  $p_{1},\ldots ,p_{i-2}\in S_{b\ast k^{i-1}};p_{x}=k\ast q_{x};1\le x\le i-2$; \\
if(i != r -- 1) \\
Scale permutations $p_{y}=k\ast q_{y};1\le y\le i$; \\
\} \\
\}

\subsection{Enumeration Based Search algorithm for a Girth Maximum $(m, 3)$ BTU  where $ m = b \ast k^{2}$}
We find $b,k\in \mathbb{N}$ such that $b$ is the smallest integer satisfying $m=b\ast k^{2}$; \\
for(  $i=2;i<3;i$++) \{ \\
\  $p_{i}=C_{j};\mathit{min}(b\ast k^{i-1}-j,j)>b\ast k^{i-2}$ such that $(j,b\ast k^{i-1},b\ast k^{i-1}-j)$ are relatively prime; \\
if( $i$ \ \ == \  $2$ \ ) \\
$p_{i-1}=I_{b\ast k^{i-1}};$  \\
\ else \{ \\
\ Rearrange the $(b\ast k^{i-1},i)$ BTU such that \ \  $p_{i-1}=I_{b\ast k^{i-1}}$; \\
Find $q_{1} \in S_{b\ast k}$ such that a girth maximum  $(b\ast k^{2},3)$ BTU is formed by  $p_{1}, p_{2}, p_{3}\in S_{b\ast k^{i-1}};p_{1}=k\ast q_{1}$; \\
\}

\section {Search Space}
\begin{theorem}
If the first permutation is randomly chosen from a $m$  Symmetric Permutation Tree, the number of unique ways to choose a
second permutation that has a partition of $(m)\in P_{2}(m)$ with the first permutation is  $(m-1)!$ .
\end{theorem}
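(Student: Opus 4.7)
The plan is to reduce the count to the number of $m$-cycles in $S_m$. Consulting the preceding Corollary, the partition between the identity permutation $I_m$ and any second permutation $\pi \in S_m$ equals the cycle-type of $\pi$; extending this to the general case, the partition between two arbitrary permutations $p_1, p_2 \in S_m$ is the cycle-type of the composition $\sigma := p_2 \circ p_1^{-1}$. Under this interpretation, demanding the trivial one-block partition $(m) \in P_2(m)$ is exactly demanding that $\sigma$ be a single $m$-cycle.

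Fixing the first permutation $p_1$, I would observe that the map $p_2 \mapsto \sigma = p_2 \, p_1^{-1}$ is a bijection $S_m \to S_m$, so the number of admissible $p_2$ equals the number of $m$-cycles in $S_m$. This latter quantity is $m!/m = (m-1)!$ by the standard argument that each $m$-cycle admits exactly $m$ cyclic representations $(a_1\, a_2\, \ldots\, a_m)$ among the $m!$ linear orderings of $\{1,\ldots,m\}$. I would then verify that the compatibility requirement $p_1(i) \neq p_2(i)$ for all $i$ translates under the bijection to $\sigma$ being fixed-point-free, which holds automatically for every $m$-cycle with $m \ge 2$; hence every such $\sigma$ yields a legitimate $p_2$, and the count is unreduced.

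The main (and essentially only) obstacle is pinning down the intended meaning of \emph{partition between permutations}. Once the correspondence with the cycle-type of the composition $p_2 \, p_1^{-1}$ is granted --- which is precisely the reading already used by Theorem 2 and its Corollary for identity-vs-cyclic pairs --- the theorem follows at once from the classical count of $m$-cycles in $S_m$, and the result is independent of which $p_1$ was chosen from $S_{PT}\{m\}$, matching the ``randomly chosen first permutation'' phrasing of the statement.
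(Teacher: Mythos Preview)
Your proof is correct. The interpretation of the partition between $p_1$ and $p_2$ as the cycle-type of $p_2 p_1^{-1}$ (equivalently of $p_1^{-1} p_2$) is exactly the one dictated by the cycle decomposition of the associated $(m,2)$ BTU, and once that is in place the reduction to counting $m$-cycles in $S_m$ is immediate and yields $(m-1)!$. Your check that compatibility is automatic for $m$-cycles is also correct and worth including.

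By comparison, the paper does not give an independent argument here: its entire proof is the sentence ``This directly follows from the permutation enumeration formula proved in [1].'' So your approach is not so much different as it is more complete---you supply a self-contained, elementary derivation via the bijection $p_2 \mapsto p_2 p_1^{-1}$ and the classical $m!/m$ count, whereas the paper defers to a general enumeration formula established elsewhere. The advantage of your route is that it makes transparent why the answer is independent of the choice of $p_1$ and why the compatibility constraint costs nothing; the paper's citation presumably encodes the same combinatorics but leaves the reader to unpack it from the earlier reference.
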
 
\begin{proof} This directly follows from the permutation enumeration formula proved in  $[1]$ . 
\end{proof}

\begin{theorem} 
The permutations having a partition of $(m)\in P_{2}(m)$ with a given permutation on a \  $m$  Symmetric Permutation Tree can be represented by a Symmetric Group of degree  $m\text{--}1$ .
\end{theorem}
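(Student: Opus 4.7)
The plan is to reduce to the case where the given permutation is the identity, and then exhibit an explicit bijection between the qualifying permutations and the symmetric group $S_{m-1}$, thereby realizing the claimed representation.

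First I would argue that we may assume without loss of generality that the fixed permutation is $I_m$. The partition between two permutations $p_1$ and $p_2$ is the cycle type of the product $p_1^{-1}p_2$ (equivalently, the half-length sequence of the cycle decomposition of their common bipartite graph). Left-multiplication by $p_1^{-1}$ is a bijection on $S_m$ that sends $(p_1,p_2)$ to $(I_m,p_1^{-1}p_2)$ and preserves this partition. Thus the set of permutations having partition $(m)\in P_2(m)$ with $p_1$ is in bijection with the set of permutations having partition $(m)$ with $I_m$.

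Next I would characterize this reduced set: a permutation $p$ has partition $(m)$ with $I_m$ if and only if $p$ consists of a single cycle of length $m$, i.e., $p$ is an $m$-cycle in $S_m$. By Theorem 6 this set has cardinality $(m-1)!=|S_{m-1}|$. The bijection with $S_{m-1}$ I would use is the canonical cycle form: every $m$-cycle can be written uniquely in cycle notation starting with the symbol $1$, namely $p=(1,a_2,a_3,\ldots,a_m)$, where $(a_2,a_3,\ldots,a_m)$ is a permutation of $\{2,3,\ldots,m\}$. Sending the $m$-cycle $p$ to the tuple $(a_2,\ldots,a_m)$ yields a bijection onto $S_{m-1}$, viewed as the symmetric group on the set $\{2,\ldots,m\}$. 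Composing with the earlier WLOG bijection produces the desired representation of the qualifying permutations by $S_{m-1}$.

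The main obstacle is interpretive rather than technical: one must confirm that the ``partition between two permutations'' in this paper coincides with the cycle type of $p_1^{-1}p_2$, so that it is invariant under left-multiplication, and one must carefully reconcile the one-line representation of permutations implicit in the tree path traversal with the cycle notation used in constructing the bijection. Once these conventions are aligned, the canonical cycle form above furnishes the explicit identification with $S_{m-1}$ and completes the argument.
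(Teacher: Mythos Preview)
Your proposal is correct but considerably more elaborate than the paper's own argument. The paper's proof of this theorem is a bare cardinality count: by the preceding theorem there are $(m-1)!$ permutations having partition $(m)$ with the given permutation, and the symmetric group of degree $m-1$ has order $(m-1)!$, so a bijection exists. That is the entire proof.

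Your route differs in two ways. First, you reduce to the identity via left-multiplication and then identify the qualifying set with the $m$-cycles; the paper never makes this reduction in the proof itself. Second, you exhibit an explicit bijection using the canonical cycle form $(1,a_2,\ldots,a_m)\mapsto(a_2,\ldots,a_m)$. The paper does eventually write down an explicit bijection, but only \emph{after} this theorem, via a separate ``Remapping function'' $f_R:A_{(m)}(p)\to S_{m-1}$ that relabels nodes depth by depth (shifting labels $j\ge x_i$ down by one at depth $i$) and whose invertibility is the content of the next theorem. So your construction and the paper's are genuinely different explicit maps. Your canonical-cycle-form map is cleaner conceptually but relies on the cycle-type interpretation of the partition, which (as you note) requires aligning conventions; the paper's depth-wise relabeling works directly in the one-line/tree representation without passing through cycle notation, at the cost of a more mechanical definition and a separate invertibility check.
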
 
\begin{proof} The proof has been provided in the Appendix.
\end{proof}

\begin{corollary} 
The Enumeration Based Search Algorithm for a girth maximum $(m, 3)$ BTU in $Z(m, 3)$ is in EXPTIME. 
\end{corollary}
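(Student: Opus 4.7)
The plan is to bound the running time of the algorithm in Section~5.2 by an explicit exponential function of $m$ and then appeal to the definition of EXPTIME. The algorithm splits into a setup phase and an enumeration phase. The setup fixes $p_1 = I_{b\ast k^2}$ and selects the circulant $p_2 = C_j$ meeting the coprimality constraints; a valid $j$ can be located by $O(m)$ gcd tests, each polynomial in $\log m$, so the setup runs in polynomial time. The enumeration phase iterates over candidates $q_1 \in S_{b\ast k}$, and for each candidate builds $p_3 = k \ast q_1$ by the scaling map and computes the girth of the resulting $(m,3)$ BTU.

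The size of the enumeration is governed by the preceding structural results. The enumeration theorem for non-isomorphic elements of $Z(m,3)$ gives a search set of cardinality $(b\ast k)!$, and the results on permutations having partition $(m)\in P_2(m)$ with a given permutation cut this down to a symmetric group of degree $b\ast k - 1$, of size $(b\ast k - 1)!$. For each candidate, constructing the bipartite graph is $O(m)$ and computing its girth by a BFS from each of its $2m$ vertices is $O(m\cdot|E|) = O(m^2)$, since the graph is $3$-regular. Summing, the total cost is
\begin{equation*}
T(m) \;=\; O\bigl((b\ast k - 1)! \cdot m^2\bigr) \;\le\; O(m!\cdot m^2) \;=\; 2^{O(m\log m)},
\end{equation*}
which is of the form $2^{\mathrm{poly}(m)}$ and hence lies in EXPTIME.

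The main obstacle is not the factorial bound itself but specifying the input-size convention under which the corollary holds. The bound $2^{\Theta(m\log m)}$ lies inside EXPTIME exactly when $m$ is taken as the input size, i.e.\ when the instance $(m,r)$ is given in unary; this is natural here because the output is an $m\times m$ matrix and the parameters are treated throughout as integers in the statement rather than as binary encodings. I would declare this convention at the start of the proof and then verify, one subroutine at a time (circulant selection, the scaling map $q_1\mapsto k\ast q_1$, bipartite-graph construction, and BFS-based girth evaluation), that every step apart from the factorial enumeration is polynomial in $m$, so that the enumeration term dominates and yields the claimed EXPTIME upper bound.
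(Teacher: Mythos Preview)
Your argument is correct and matches the paper's approach: the corollary is placed immediately after Theorems~6 and~7 with no separate proof, the intended reasoning being simply that the search for $q_1$ ranges over a set isomorphic to $S_{b\ast k - 1}$, hence of size $(b\ast k - 1)!$, which is exponential in the input. You have supplied the details the paper omits---the polynomial per-candidate cost and the unary input-size convention---but the underlying idea is the same.
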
 

\subsection {Notation}
Given a permutation  $p\in S_{m}$ , let us denote the set of permutations  $q\in S_{m}$  such that  $q\notin C(p)$ and the partition between $q$  and  $p$  is $(m)\in P_{2}(m)$ as $A_{(m)}(p)$ .

\begin{definition} {Remapping function} \\
We define a remapping label function  $f_{R}:A_{(m)}(p)\to S_{m-1}$ that remaps node labels for  $A_{(m)}(p)\subset S_{m}$ Given permutation
$(x_{1},x_{2},\ldots ,x_{m})\in A_{(m)}(p)$ \\
$\begin{gathered}f_{R}:A_{(m)}(p)\to S_{m-1}\\(x_{1}x_{2}\ldots x_{i-1}\ldots
x_{m})\to (y_{1}y_{2}\ldots y_{i}\ldots y_{m-1})\end{gathered}$ \\
for ( $i=1$  ; $i<m$ ;  $i$ ++) \{ \\
Remapping of node labels at depth  $i$ \textbf{: }
\begin{enumerate}
\item[] {\bfseries
{\ } $\begin{gathered}j\to j,\mathit{if}j<x_{i}\\j\to
j\text{--}1,\mathit{if}j\ge x_{i}\end{gathered}$ {;}} \\
\} \\
We remove all leaf nodes at depth  $m$  .
\end{enumerate}
\end{definition}

\begin{theorem}
{The inverse function of the Remapping label function $[f_{R}]$ exists .}
\end{theorem}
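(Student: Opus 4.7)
The plan is to exhibit $f_R$ as a bijection between $A_{(m)}(p)$ and $S_{m-1}$; once bijectivity is established, existence of the inverse is automatic. The cleanest route is a cardinality-plus-injectivity argument, sidestepping the need to write down the inverse in closed form.

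First I would invoke Theorem 6 to record $|A_{(m)}(p)| = (m-1)!$, which exactly matches $|S_{m-1}| = (m-1)!$. On finite sets of equal cardinality, injectivity and surjectivity of $f_R$ are equivalent, so it suffices to prove that $f_R$ is injective.

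Next, I would prove injectivity by induction on depth. Suppose $q = (x_1, \ldots, x_m)$ and $q' = (x_1', \ldots, x_m')$ lie in $A_{(m)}(p)$ with $f_R(q) = f_R(q') = (y_1, \ldots, y_{m-1})$. Let $i$ be the smallest depth at which they disagree; then $x_j = x_j'$ for $j < i$ but $x_i \neq x_i'$. At depth $i$, the available label sets in the symmetric permutation tree coincide for $q$ and $q'$ (since the preceding path is identical), and the depth-$i$ remapping rule acts as a strictly order-preserving shift from the available labels onto $\{1, \ldots, m-i\}$. Hence distinct $x_i \neq x_i'$ force distinct images $y_i \neq y_i'$, contradicting the assumption. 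Therefore $f_R$ is injective, and combined with the cardinality equality, it is a bijection; this yields the desired inverse.

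As a constructive supplement I would sketch $f_R^{-1}$ directly: given $(y_1, \ldots, y_{m-1}) \in S_{m-1}$, at each depth $i$ the unique label $x_i$ in the currently available label set whose depth-$i$ shifted image equals $y_i$ can be recovered by inverting the shift, and the final coordinate $x_m$ is forced by the permutation constraint. The main obstacle, and the step that deserves most care, is verifying that the reconstructed $(x_1, \ldots, x_m)$ actually lands in $A_{(m)}(p)$: it must be compatible with $p$ and form partition $(m) \in P_2(m)$ with $p$. Compatibility is a local per-depth constraint handled by the shift rule, but the partition-$(m)$ condition is global, controlled by the cycle structure of the bipartite incidence with $p$; here the cardinality count from Theorem 6 does the heavy lifting, ensuring that the injective image already fills $S_{m-1}$ and thus no candidate is lost to the partition constraint.
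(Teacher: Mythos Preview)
Your approach is correct but takes a genuinely different route from the paper's. The paper's proof is purely constructive: it writes down an explicit candidate $h:S_{m-1}\to A_{(m)}(p)$ via the depth-wise rule $j\to j$ for $j<t_i$ and $j\to j+1$ for $j\ge t_i$ (where $p=(t_1,\ldots,t_m)$ is the fixed first permutation), and then simply asserts ``We can verify that $h$ is indeed $[f_R]^{-1}$.'' You instead argue abstractly: invoke Theorem~6 for $|A_{(m)}(p)|=(m-1)!=|S_{m-1}|$, establish injectivity of $f_R$ by a first-disagreement-depth argument, and conclude bijectivity between equal-cardinality finite sets. Your route has the merit of isolating where the real difficulty lies---namely, whether a putative preimage actually satisfies the global partition-$(m)$ condition and hence lands in $A_{(m)}(p)$---a point the paper's one-line verification glosses over; the cardinality count neatly absorbs that issue without further work. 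The paper's route, on the other hand, directly supplies the explicit formula for $[f_R]^{-1}$ that is used downstream in the proof of Theorem~9, though your constructive supplement sketches essentially the same map.
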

\begin{proof}  The proof has been provided in the Appendix.
\end{proof}

\begin{corollary} 
The remapping function $[f_{R}]$ maps each element of $A_{(m)}(p)$ to an element of $S_{m\text{--}1}$ .
\end{corollary}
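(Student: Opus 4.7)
The plan is to derive this corollary directly from the immediately preceding theorem, which asserts the existence of the inverse $[f_R]^{-1}$. A two-sided inverse forces $f_R$ to be a bijection onto its declared codomain $S_{m-1}$, so every element of $A_{(m)}(p)$ is mapped into an element of $S_{m-1}$, which is precisely the statement of the corollary. This is the primary one-line argument I would record.

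As a consistency check I would appeal to cardinalities: by the earlier theorem on permutations having partition $(m) \in P_2(m)$ with a given permutation, the set $A_{(m)}(p)$ contains exactly $(m-1)!$ elements, matching $|S_{m-1}|$. This ensures that the bijection promised by the previous theorem is consistent with a surjection onto all of $S_{m-1}$, so either surjectivity or injectivity alone would be enough to close the argument.

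If a self-contained verification is desired, the direct route is to walk through the definition of the remapping step by step. Starting from $(x_1, x_2, \ldots, x_m) \in A_{(m)}(p)$, I would iterate over depths $i = 1, 2, \ldots, m-1$ and observe that the substitution $j \to j$ for $j < x_i$ and $j \to j - 1$ for $j \geq x_i$ is the canonical order-preserving bijection from $\{1, 2, \ldots, m\} \setminus \{x_i\}$ onto $\{1, 2, \ldots, m-1\}$. After processing all depths and deleting the leaves at depth $m$, the resulting tree has the structure of $S_{\mathit{PT}}\{m-1\}$ and the transformed sequence $(y_1, y_2, \ldots, y_{m-1})$ is a valid element of $S_{m-1}$.

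The main obstacle in the direct verification would be checking that the two defining properties of the symmetric permutation tree are preserved under the remapping, namely that no successor shares a label with any of its ancestors and that no two siblings share a label. Both properties reduce to the partition-$(m)$ hypothesis, which forces the permutation $q$ to produce a single-cycle structure relative to $p$ and therefore rules out the label collisions that the compression step could otherwise introduce. I would present the appeal to the previous theorem as the main argument and regard the direct check as a cross-validation.
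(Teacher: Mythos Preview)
Your primary argument---deducing the corollary from the preceding theorem on the existence of $[f_R]^{-1}$---is exactly how the paper treats it: the corollary is stated without proof immediately after Theorem~8, so it is meant to follow at once from the inverse's existence. Your additional cardinality check and direct verification go beyond what the paper records, but they are consistent with and reinforce the same line of reasoning.
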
 

\begin{corollary} 
The inverse remapping function  $[f_{R}]^{-1}$ maps each element of  $S_{m\text{--}1}$ to an element of  $A_{(m)}(p)$ ,
where  $p$  is the first chosen permutation from \ a  $m$  Symmetric Permutation Tree.
\end{corollary}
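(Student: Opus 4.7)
The plan is to deduce the corollary as an almost immediate consequence of three previously established facts: the preceding theorem asserting that $[f_R]^{-1}$ exists (so $f_R$ is injective on $A_{(m)}(p)$), the preceding corollary asserting that $f_R$ sends every element of $A_{(m)}(p)$ into $S_{m-1}$ (so the codomain really is $S_{m-1}$), and Theorem 6 which gives the count $|A_{(m)}(p)| = (m-1)!$. Since $|S_{m-1}|=(m-1)!$ as well, the three facts together force $f_R$ to be a bijection between $A_{(m)}(p)$ and $S_{m-1}$ by a pigeonhole argument on finite sets.

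Carrying this out, I would first quote the injectivity of $f_R$ from the preceding theorem (existence of an inverse at least on the image). Next, I would combine this with the codomain containment from the preceding corollary to conclude that $f_R$ is an injection $A_{(m)}(p)\hookrightarrow S_{m-1}$. Then, invoking Theorem 6 to get equal finite cardinalities, I would conclude the injection is actually a bijection. Finally, the inverse $[f_R]^{-1}$ is therefore defined on all of $S_{m-1}$, and for any $\sigma \in S_{m-1}$ the element $[f_R]^{-1}(\sigma)$ is by definition the unique preimage in $A_{(m)}(p)$, which is exactly the statement of the corollary.

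The step I expect to be the main obstacle is not a calculation but a bookkeeping check: one has to make sure that the preceding theorem really gives \emph{injectivity} of $f_R$ as a function on all of $A_{(m)}(p)$ and not merely local invertibility on the image, and that the preceding corollary guarantees the image lies in $S_{m-1}$ (rather than in some sequence set that only superficially resembles $S_{m-1}$, e.g.\ one where repeated labels could occur). Once these two lemma-level facts are stated crisply, the counting argument is mechanical. An alternative, more constructive route would be to invert the depth-$i$ remapping explicitly by the rule $x_i = y_i$ if $y_i < z_i$ and $x_i = y_i+1$ if $y_i \ge z_i$ for $1\le i<m$, where $p=(z_1,\ldots,z_m)$, then determine $x_m$ as the unique label in $\{1,\ldots,m\}\setminus\{x_1,\ldots,x_{m-1}\}$ and verify (i) that the $x_i$ are distinct, (ii) that $x_i\neq z_i$ for every $i$, and (iii) that the resulting permutation has partition $(m)\in P_2(m)$ with $p$; but property (iii) is the delicate one and is precisely what the cardinality route lets us avoid, so I would present the counting proof as the primary argument and relegate the explicit inversion formula to a remark.
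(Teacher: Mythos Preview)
Your counting argument is correct, but it takes a genuinely different route from the paper. In the paper this corollary is not given a separate proof; it is meant to be read off directly from the proof of Theorem~8, where the inverse is \emph{constructed} explicitly as a map $h:S_{m-1}\to A_{(m)}(p)$ via the depth-wise rule $j\mapsto j$ if $j<t_i$ and $j\mapsto j+1$ if $j\ge t_i$ (with $p=(t_1,\ldots,t_m)$). Once $h$ is declared to have codomain $A_{(m)}(p)$ and is identified with $[f_R]^{-1}$, the corollary is immediate.

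Your approach instead combines injectivity of $f_R$ (Theorem~8), the codomain containment $f_R(A_{(m)}(p))\subset S_{m-1}$ (the preceding corollary), and the cardinality match $|A_{(m)}(p)|=(m-1)!=|S_{m-1}|$ (Theorem~6) to force $f_R$ to be a bijection by pigeonhole. This buys you something the paper glosses over: you never have to verify that the explicit formula for $h$ actually produces an element of $A_{(m)}(p)$, in particular that the output has partition $(m)$ with $p$ --- exactly the point you flag as ``delicate''. The paper's constructive route, on the other hand, gives you the inverse in closed form, which is what is actually used downstream (e.g.\ in Theorem~9). Amusingly, the explicit inversion you sketch as your ``alternative, more constructive route'' is essentially the paper's primary argument, so you have anticipated both approaches.
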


\section {Search Space: Cayley Graph of A Symmetric Group}
{The search for a girth maximum $(m,r)$ BTU can be broken into $r\text{--}2$  steps for  $r \ge 3$ where each $q_{i}\in S_{b\ast k^{i}}$  is scaled, 
$p_{i}=k^{r\text{--}1\text{--}i}\ast q_{i}$ ; with each $p_{i}\in S_{b\ast {r-1}}$ for  $1\le i\le r-2$. }
{Without loss of generality, at each step of the search process, we can transform the $(b\ast k^{i-1},i)$ BTU with 
$p_{i}=I_{b\ast k^{i-1}}$ such that  $\{q_{i}\notin C(I_{b\ast k^{i}})/q_{i}\in S_{b\ast k^{i}}\}=A_{(b\ast k^{i})}(I_{b\ast k^{i}})$.}

\begin{theorem}
The search space for each permutation $q_{i}\in S_{b\ast k^{i}}$ {where } $p_{i}=k^{r\text{--}1\text{--}i}\ast q_{i};p_{i}\in S_{b\ast {r-1}}$
{ for } $1\le i\le r-2$ {in a }girth maximum $(m,r)$ BTU can be mapped to a Cayley Graph of a Symmetric Group of degree  $b\ast k^{i}\text{--}1$ where
\begin{enumerate}
\item {
 $b,k\in \mathbb{N}$  satisfy $m=b\ast k^{r-1}$ such that  $b$ is the smallest integer satisfying the equation .}
\item {
 $\beta _{1},\beta _{2},\ldots ,\beta _{r-1}\in P_{2}(b\ast k^{r-1})$ refer to optimal partitions derived in $[2]$ such that the girth
maximum $(m,r)$  BTU \ lies in  $\Phi (\beta _{1},\beta _{2},\ldots,\beta _{r-1})$ ,where  $\beta _{i}$  refers to  $\sum
_{j=1}^{r-1-i}b\ast k^{i}=b\ast k^{r-1}$ for  $1\le i\le r-1$ . Thus, $\beta _{1},\beta _{2},\ldots ,\beta _{r-1}\in P_{2}(b\ast k^{r-1})$
are  $\sum _{j=1}^{k^{r-2}}b\ast k=b\ast k^{r-1}$ ,  $\sum_{j=1}^{k^{r-3}}b\ast k^{2}=b\ast k^{r-1}$ ,  $\ldots $ ,  $\sum_{j=1}^{k}b\ast k^{r-2}=b\ast k^{r-1}$ and  $\sum _{j=1}^{1}b\ast
k^{r-1}=b\ast k^{r-1}$ respectively.}
\end{enumerate}
\end{theorem}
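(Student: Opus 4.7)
The plan is to assemble this theorem directly from the structural results already established in the paper, so the core of the argument is to identify a bijection between the search space at each stage and the vertex set of the desired Cayley graph, and then invoke the standard Cayley-graph construction on a symmetric group. I would proceed in three main steps.

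First, I would pin down what the search space actually is at stage $i$. By Theorem 4, after the rearrangement step we may assume $p_{i-1}=I_{b\ast k^{i-1}}$, and then after the scaling step (which applies for $i \neq r-1$) the fixed identity permutation against which the next unknown $q_i$ must be compared becomes $I_{b\ast k^{i}}$. The compatibility requirement plus the optimal-partition requirement between $p_i = k^{r-1-i}\ast q_i$ and the previous identity permutation forces $q_i$ to lie in $A_{(b\ast k^{i})}(I_{b\ast k^{i}})$, using the notation introduced just before the statement.

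Second, I would invoke the remapping-function machinery. By the Theorem asserting the existence of $[f_R]^{-1}$ and its two corollaries, the function $f_R$ is a bijection between $A_{(b\ast k^{i})}(I_{b\ast k^{i}})$ and $S_{b\ast k^{i}-1}$. This is the only nontrivial identification needed: it reduces the search domain at stage $i$ to a symmetric group of degree $b\ast k^{i}-1$, exactly matching the degree asserted in the theorem.

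Third, I would equip $S_{b\ast k^{i}-1}$ with a Cayley graph structure by fixing any standard generating set (for instance the adjacent transpositions, or any Coxeter generating set for the symmetric group) and pulling this graph back through $f_R^{-1}$ to obtain a Cayley graph on $A_{(b\ast k^{i})}(I_{b\ast k^{i}})$. Since Cayley graphs on $S_n$ are well-defined for any choice of generating set, no additional verification is needed beyond noting that the vertex set is the correct symmetric group. The conditions on $b,k$ and the $\beta_j$ listed in the theorem simply restate the setup already fixed by the algorithm of Theorem 4, so they require no separate argument.

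The main obstacle, I expect, is not a calculational one but a bookkeeping one: making sure that the three layers of reduction (the $Z(m,r)$ reduction from Theorem 3, the rearrange-and-scale reduction within the enumeration algorithm, and the remapping reduction $f_R$) are composed in the correct order so that the fixed permutation against which $q_i$ is measured really does become $I_{b\ast k^{i}}$ at every stage. Once that is laid out carefully, the Cayley-graph conclusion follows immediately from the bijection with $S_{b\ast k^{i}-1}$.
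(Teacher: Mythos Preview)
Your proposal is correct and follows essentially the same route as the paper: identify the stage-$i$ search space as $A_{(b\ast k^{i})}(I_{b\ast k^{i}})$ via the rearrange-and-scale reduction, apply the remapping bijection $f_R$ to land in $S_{b\ast k^{i}-1}$, and then pass to the Cayley graph on that symmetric group. The paper's own proof does exactly this, only more tersely and without the explicit bookkeeping discussion you include; your third step (choosing a generating set and pulling the graph structure back) is a slight refinement, since the paper simply declares that one considers the Cayley graph of $S_{b\ast k^{i}-1}$ without specifying generators.
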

\begin{proof}  The proof has been provided in the Appendix.
\end{proof}

\begin{corollary} 
The search space for a girth maximum $(m,3)$ BTU can be mapped to a Cayley Graph of a Symmetric Group of degree  $b\ast k\text{--}1$ where
\begin{enumerate}
\item  $b,k\in \mathbb{N}$  satisfy  $m=b\ast k^{2}$ such that  $b$ is the smallest integer satisfying the equation .
\item $\beta _{1},\beta _{2}\in P_{2}(b\ast k^{2})$ refer to optimal partitions derived in $[2]$ such that the girth maximum  $(m,3)$  BTU lies in  $\Phi (\beta _{1},\beta _{2})$ , i.e.,  $\sum
_{j=1}^{k}b\ast k=b\ast k^{2}$ and  $\sum _{j=1}^{1}b\ast k^{2}=b\ast k^{2}$ respectively.
\end{enumerate}
\end{corollary}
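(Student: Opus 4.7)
The plan is to obtain this as a direct specialization of the preceding theorem with $r = 3$. In that case the range $1 \le i \le r - 2$ collapses to the single value $i = 1$, so the only degree of freedom in constructing an element of $Z(m, 3)$ is the choice of $q_1 \in S_{b \ast k}$, with $p_1 = k \ast q_1$, $p_2 = I_m$, and $p_3 = C_j$ fixed by the construction in Theorem $4$. Consequently, the search space for a girth maximum $(m, 3)$ BTU is identified with the search space for this single permutation $q_1$.

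I would then invoke the preceding theorem directly: it asserts that the search space for each $q_i$ maps to a Cayley Graph of a Symmetric Group of degree $b \ast k^i - 1$. Substituting $i = 1$ yields a Cayley Graph of a Symmetric Group of degree $b \ast k - 1$, which is the claim of the corollary. The two enumerated hypotheses of the corollary are likewise the $r = 3$ instance of the corresponding hypotheses in the preceding theorem: the condition that $b$ is the smallest integer with $m = b \ast k^2$ is just the $r = 3$ case of $m = b \ast k^{r-1}$, and the general optimal partition formula $\sum_{j=1}^{r-1-i} b \ast k^i = b \ast k^{r-1}$ reduces, for $i = 1$ and $i = 2$, to $\sum_{j=1}^{k} b \ast k = b \ast k^2$ and $\sum_{j=1}^{1} b \ast k^2 = b \ast k^2$ respectively, exactly matching the stated $\beta_1$ and $\beta_2$.

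The step that requires the most care is checking that the identification used inside the preceding theorem, namely the bijection between $A_{(b \ast k)}(I_{b \ast k})$ and $S_{b \ast k - 1}$ furnished by the remapping function $f_R$ together with its inverse (whose existence is the earlier theorem on $[f_R]^{-1}$), specializes cleanly when $r = 3$ and that no scaling or cross-level compatibility constraint appealed to for larger $r$ is being silently invoked. Since for $r = 3$ only one level of scaling is present and the rearrangement making $p_2 = I_m$ is exactly the transformation described before the preceding theorem, this verification is routine. I expect the main obstacle is bookkeeping rather than any essential new argument, after which the corollary follows as a one-line specialization.
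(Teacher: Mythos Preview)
Your proposal is correct and matches the paper's intent: the corollary is stated immediately after Theorem~9 with no separate proof, so it is meant as the direct $r=3$ specialization you describe. Your additional bookkeeping (identifying $i=1$ as the sole index, checking that the partition formulas reduce correctly, and noting that the $f_R$ bijection specializes without issue) is more detail than the paper supplies, but it is all accurate and aligned with the same approach.
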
 

\section {Interpretation of this search space.}
 
\begin{figure}[htbp]
\includegraphics[scale=0.22]{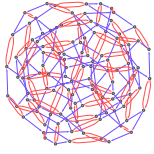}
\caption{Cayley Graph of degree $5$ obtained as a screenshot from Mathematica Software}
\end{figure}

\subsection {Permutation and Associated Partition for each node of the Cayley Graph of the Symmetric Group of degree  $b\ast k^{i}\text{--}1$ }
For each node of the Cayley Graph of the Symmetric Group of degree $b\ast k^{i}\text{--}1$ , we associate a permutation $q_{i-2}$ and an associated
partition between $p_{i}$  and $p_{i-2}$ , since the search process involves optimizing the partition between  $p_{i}$  and  $p_{i-2}$ at each step.

\section{Conclusion }
This paper describes an algorithm for finding a girth maximum $(m,r)$ BTU by building upon the theoretical results developed earlier in $[1]$ and  $[2]$ and also describes the structure of the search space for each permutation. We also show that the search algorithm for each permutation that maximizes girth is in EXPTIME. The detailed implementation and its comparison from other known results from the available literature have been discussed in $[5]$.

\newpage
\section {Appendix}

\subsection {Proof for Theorem $1$}
\begin{proof} Let $T$ be a labeled  $(m,r)$  BTU in  $\Phi (\beta _{1},\beta_{2},\ldots ,\beta _{r\text{--}1})$ represented by a set of compatible permutations ${p_{1} , p_{2} , \ldots, p_{r}}$. 
Let us transform $T$ by row and column permutations to a labeled $(m,r)$  BTU, $T_1$ which now has $p_{r - 1} =I_{m}$. Since $T_1$ is isomorphic to $T$ and $T_1 \in X(m, r)$, every labeled  $(m,r)$  BTU in  $\Phi (\beta _{1},\beta_{2},\ldots ,\beta _{r\text{--}1})$ is isomorphic to some labeled $(m,r)$  BTU in  $X(m, r)$.
\end{proof}

\subsection {Proof for Theorem $2$}
\begin{proof}
Let us assume that $k, 1 \le k < m$ and $m$ are not relatively prime.  Let $l = min(k, m - k)$. Since $l$ divides $m$, we have $m/l$ cycles each of length $2 * l$ and hence the partition between $I_m$ and $C_j$ is $\sum _{i=1}^{l}(m/l) = m$. It follows that if $k$ and $m$ are relatively prime the partition is $(m) \in P_2(m)$ since we have only one cycle of length $2 \ast m$. 
\end{proof}

\subsection {Proof for Theorem $3$}
\begin{proof}
The derivation of optimal partitions that lead to a girth maximum BTU in $[2]$ refer to the partitions between permutations $p_{i}\in S_{b\ast k^{r-1}}$ and  $p_{i+1}\in S_{b\ast k^{r-1}}$ for  $1\le i\le r-1$. Let us consider the other partitions and consider maximization of the micro-partition cycles for each of them. We now consider the partitions between permutations  $p_{u}\in S_{b\ast k^{2}}$ and $p_{v}\in S_{b\ast k^{2}}$ where  $1\le u\le r-1;1\le v\le r-1;u>v+1$ and apply the micro-partition cycle maximization criterion to obtain $p_{i}=k^{r\text{--}1\text{--}i}\ast q_{i}$ where $q_{i}\in S_{b\ast k^{i}}$ for $1\le i\le r-1$ for the girth maximum BTU. Thus, the proof here is for the existence of a girth maximum $(m,r)$ 
BTU with the form $p_{i}=k^{r\text{--}1\text{--}i}\ast q_{i}$ where $q_{i}\in S_{b\ast k^{i}}$ for $1\le i\le r-1$ .
\end{proof}

\subsection {Proof for Theorem $5$}
\begin{proof} Let us prove the statement by mathematical induction. For $i =2$, a girth maximum $(k, 2)$ BTU is scaled by a factor of $k$ and a permutation is found with partition $(k^2) \in P_2(k^2)$ that maximizes the girth of the $(k^2, 3)$ BTU. Hence, the statement is true for $i = 2$. Let us assume that the statement is true for $i=z$. We need to establish that the statement for $i=z + 1$. Let $A$ be the girth maximum $(k^{z-1}, z)$ BTU $\in P_2(k^{z-1})$. Let us scale all the permutations in $A$ by a factor of $k$ and add a permutation with partition $(k^z) \in P_2(k^z)$ that maximizes the girth of the consequent $(k^{z}, z+ 1)$ BTU that we now refer to as $B$. We need to establish that $B$ is also a girth maximum $(k^{z}, z+ 1)$ BTU. Let us consider the partitions \{ $p_{1} , p_{2} , \ldots, p_{z},  p_{z + 1}$ \} such that $p_{1} , p_{2}$ are scaled versions of the original girth maximum $(k, 2)$ BTU, $p_{1} , p_{2}, p_{z}$ are scaled versions of the original girth maximum $(k^2, 3)$ BTU and $p_{1} , p_{2} , \ldots, p_{z}$ are scaled versions of the original girth maximum $(k^{z - 1}, z)$ BTU. It is now clear that the partition between $p_{1}$ and $p_{2}$ is $\sum _{j=1}^{z-1}k^{1}=k^{z}$, partition between $p_{2}$ and $p_{3}$ is $\sum _{j=1}^{z-2} k^{2}= k^{z}$, and that the partition between $p_{z}$ and $p_{z + 1}$ is $\sum _{j=1}^{1} k^{z}= k^{z}$ which correspond to the optimal partitions that lead to a girth maximum BTU. Further, at each of the process, partitions between other pairs of permutations $p_{x}$ and $p_{y}$ where $ 1 \le y < x + 1 \le r$ are also optimized at each stage of the BTU building process, and the scaled versions of the permutations continue to preserve their optimality. 
Since $p_{z + 1}$ maximizes the girth of the $(k^{z}, z+ 1)$ BTU, and since the partitions between other pairs of permutations are also optimized, we conclude that this is indeed a girth maximum  exactly $(k^{z}, z+ 1)$ BTU, thus proving the statement for all $i >2$.
\end{proof}

\subsection {Proof for Lemma $1$}
\begin{proof} If the prime factorization of the greatest common divisor of $b$ and $k$ have a non-trivial power of a prime, $h^t , h,t\in \mathbb{N}$, then for the case of a $(b * k^{t}, t + 1)$ BTU, the girth maximum BTU the parameters $b$ and $k$ get replaced by $b/h^{t}$ and $k\ast h$ for which the statement is clearly not valid, since $b/h^{t}$ minimized with $ b * k^t = (b/h^t) \ast k\ast h$. If the prime factorization of the greatest common divisor of $b$ and $k$ does have a non-trivial power of a prime, then Theorem $5$ applies and the statement is valid.
\end{proof}

\subsection {Proof for Theorem $7$}
\begin{proof} Since the number of permutations having a partition of $(m)\in P_{2}(m)$ with a given permutation on a \  $m$  Symmetric
Permutation Tree is  $(m-1)!$ , we can map each of them to Symmetric Group of degree  $m\text{--}1$ which has a known order of  $(m-1)!$ .
\end{proof}

\subsection {Proof for Theorem $8$}
\begin{proof} 
Given first permutation \ $p=(t_{1},t_{2},\ldots,t_{m})\in S_{m}$ for  $A_{(m)}(p)$ , we define a function  $h:S_{m-1} \to A_{(m)}(p)$, 
\\ $(y_{1} y_{2} \ldots y_{i}\ldots y_{m-1})\to (x_{1} x_{2}\ldots x_{i-1}\ldots x_{m})$  \\
for( $i=1$  ; $i\le m$ ;  $i$ ++) \{ \\
{Depth } $i$  :  
 $\begin{gathered}j\to j\ \forall j<t_{i}\\j\to j+1\ \forall j\ge
t_{i}\end{gathered}$ . \\
\} \\
We can verify that  $h$  is indeed \  $[f_{R}]^{-1}$ .
\end{proof}

\subsection {Proof for Theorem $9$}
\begin{proof} 
{We have $min(r-2, 0)$ steps of search for a girth maximum $(m,r)$ BTU.}
At each step  $i;1\le i\le r-2$  of the search process we re-arrange the
BTU and map  $p_{i}$  to  $I_{b\ast k^{i}}$ . The permutations get
progressively scaled by  $k$  after addition of the second permutation,
 $\ldots $ ,  $(r\text{--}1)^{\mathit{th}}$ permutation after different
steps of the \ of the search process.
In each step, we search for  $q_{i}\in S_{b\ast k^{i}}$ that leads to the
best girth and has a partition  $(b\ast k^{i})\in P_{b\ast k^{i}}$ 
with  $p_{i}=I_{b\ast k^{i}}$ and hence can be mapped to the set 
$A_{(b\ast k^{i})}(I_{b\ast k^{i}})$ . We have proved that the set 
$A_{(b\ast k^{i})}(I_{b\ast k^{i}})$ is isomorphic to  $S_{b\ast
k^{i}\text{--}1}$ with mapping from  $A_{(b\ast k^{i})}(I_{b\ast
k^{i}})$ to  $S_{b\ast k^{i}\text{--}1}$ with the label remapping
function  $[f_{R}]$ , and the mapping from  $S_{b\ast k^{i}\text{--}1}$
to  $A_{(b\ast k^{i})}(I_{b\ast k^{i}})$ with the inverse label
remapping function  $[f_{R}]^{-1}$ . Since the search space for 
$q_{i}\in S_{b\ast k^{i}}$ is now mapped to a Symmetric Group of degree
 $b\ast k^{i}\text{--}1$ which has an order  $(b\ast k^{i}\text{--}1)!$
, we consider the Cayley Graph of the Symmetric Group of degree  $b\ast
k^{i}\text{--}1$ .
Each node of a Cayley Graph of the Symmetric Group of degree  $b\ast
k^{i}\text{--}1$ is connected to other  $b\ast k^{i}-2$  nodes, and any
other node can be reached in  $b\ast k^{i}$ optimal node transitions.
\end{proof}

\end{document}